\newtheorem{theorem}{Theorem}
\newcommand{\beq}{\begin{equation}}
\newcommand{\eeq}{\end{equation}}
\renewcommand{\rho}{\varrho}
\begin{document}

\title{Quantum coherence of Gaussian states}

\author{D. Buono$^{1}$, G. Nocerino$^{1}$, G. Petrillo$^{2}$, G. Torre$^{1}$, G. Zonzo$^{2}$ and F. Illuminati$^{1,3,4}$\footnote{Corresponding author: filluminati@unisa.it}\vspace*{0.2cm}}
\affiliation{$^1$ Dipartimento di Ingegneria Industriale, Universit\`a degli Studi di
Salerno, Via Giovanni Paolo II 132, I-84084 Fisciano (SA), Italy}
\affiliation{$^2$ Dipartimento di Fisica, Universit\`a degli Studi di Salerno,
Via Giovanni Paolo II, I-84084 Fisciano (SA), Italy}
\affiliation{$^3$ Consiglio Nazionale delle Ricerche, Istituto di Nanotecnologia, Rome Unit, I-00195 Roma, Italy}
\affiliation{$^4$ INFN, Sezione di Napoli, Gruppo collegato di Salerno, I-84084 Fisciano
(SA), Italy}

\date{\today}

\begin{abstract}
We introduce a geometric quantification of quantum coherence in single-mode Gaussian states and we investigate the behavior of distance measures as functions of different physical parameters. In the case of squeezed thermal states, we observe that re-quantization yields an effect of noise-enhanced quantum coherence for increasing thermal photon number.
\end{abstract}


\maketitle

\section{Introduction}

The superposition principle is the essential property that discriminates the intrinsically linear and coherent quantum mechanics from the essentially nonlinear and chaotic classical mechanics. Quantum coherence is the key ingredient in all quantum phenomena, from quantum optics and quantum information~\cite{Glauber19630, Scully1991, Albrecht1994,Nielsen2000} to condensed matter physics and quantum thermodynamics~\cite{Deveaud2009,Ford2014, Correa2014, Robnagel2014, Lostaglio2015, Aberg2014}.

In recent years, the scientific community has started a significant effort towards the rigorous exploration of quantum coherence, including its qualification, quantification, and operational significance, along conceptual and mathematical lines analogous to the ones previously followed for the rigorous qualification and quantification of quantum entanglement and quantum correlations.

So far, the investigation have been limited to states of finite-dimensional quantum systems. In this respect, following the seminal identification of entropic and geometric quantifiers and the axiomatic properties that must be satisfied by any proper measure of quantum coherence~\cite{Baumgratz2014}, important progress has included the classification of single-qubit incoherent quantum operations~\cite{2015Streltsov}, the characterization of quantum coherence in two-qubit Bell-diagonal states~\cite{Bromley2015}, and the first results on the operational significance of quantum coherence for quantum technology protocols~\cite{2016Hu}. In particular, it has recently been shown that quantum coherence can be exploited to activate quantum entanglement and other useful quantum correlations~\cite{2016Killoran}.

On the other hand, continuous variable (CV) states of infinite-dimensional systems are of fundamental importance both from theoretical and experimental point of view; in particular a lot of work has been dedicated to Gaussian states and Gaussian channels, as they are both theoretically easy to manage and experimentally easy to produce. Moreover, although non-Gaussian CV states have been proved quite useful in particular cases, such as CV quantum teleportation~\cite{2007Dellanno, 2010Dellanno, 2010Dellanno2}, Gaussian states constitute the foundation of every quantum processing tasks in the CV setting. As a consequence, it is important to generalize the quantification of quantum coherence and the classification of quantum incoherent operations to the case of Gaussian states.

In this work we follow the geometric approach pioneered by Baumgratz, Cramer, and Plenio~\cite{Baumgratz2014}, and we compute different distance measures in detail for the main classes of single-mode Gaussian states. 

The article is organized as follows. In Sec.~\ref{sec1} we briefly review the basic formalism of single-mode Gaussian states. In Sec.~\ref{sec2} we classify the set of incoherent single-mode Gaussian states, namely those states that have diagonal density matrix in the Fock basis, showing
that they correspond to purely thermal states with zero displacement and zero squeezing.
In Sec.~\ref{sec3} we define quantum coherence as the minimal distance from the set of thermal states and we introduce two such geometric measures, respectively in terms of the Bures and Hellinger metric. In Sec.~\ref{sec4} we explicitly compute the given measures for the most significant classes of single-mode Gaussian states as functions of different physical parameters such as displacement, squeezing, and thermal noise. In particular, we show that squeezed thermal states feature a remarkable phenomenon of noise-enhanced quantum coherence with increasing number of thermal photons at fixed squeezing. In Sec.~\ref{sec5}, we draw our conclusions and discuss future research on the extension to multimode Gaussian states and the application to CV quantum technologies. In the App.~\ref{App1} we include various mathematical and calculational details leading to the results presented in the main text.

\section{Single-mode Gaussian states}
\label{sec1}

In this section we briefly review the mathematical formalism aiming at describing one-mode Gaussian states.\\
Every single-mode Gaussian state can be expressed in the following form~\cite{2005Ferraro, 2011Weedbrook, 2014Adesso}
\begin{equation}
\rho_G=D(\beta)S(\xi)\nu_{th}S^\dagger(\xi)D^\dagger(\beta),
\label{gensinglemode}
\end{equation}
where $D(\beta)=\exp \mathcal{f} \beta a^\mathcal{y}-\beta^*a \mathcal{g}$ is the displacement operator with complex amplitude $\beta$, $S(\xi)=\exp \mathcal{f} \frac{1}{2}\xi (a^\mathcal{y})^2-\frac{1}{2}\xi^*a^2\mathcal{g}$ is the squeezing operator with squeezing parameter $\xi=re^{i\psi}$ and $\nu_{th}$ in the one-mode thermal state:
\begin{equation}
\nu_{th}=\frac{1}{1+N_{th}}\sum_{n=0}^\mathcal{1} \left( \frac{N_{th}}{1+N_{th}} \right)^n \left| n \right \rangle \left \langle n \right |.
\end{equation}

Due to the Gaussian form of the Wigner function representing these states in the phase space, they are characterized uniquely by the first moments (the displacement vector):

\begin{equation}
R\equiv ( \left \langle X \right \rangle,\left \langle P \right \rangle)=\sqrt{2}(Re[\beta],Im[\beta])
\label{firstm}
\end{equation}
and second moments (the covariance matrix):
\begin{equation}
\sigma_G=\begin{bmatrix}
a & c  \\
c & b
\end{bmatrix}\;.
\label{normalform}
\end{equation}
with
\begin{align}
\label{abc}
a &=\frac{1+2N_{th}}{2}(\cosh (2r)+\cos \psi \sinh (2r)),\nonumber \\
b &=\frac{1+2N_{th}}{2}(\cosh (2r)-\cos \psi \sinh (2r)), \\
c &= \frac{1+2N_{th}}{2}(\sin \psi\sinh (2r)) \nonumber .
\end{align}

This parametrization will be useful for the study of quantum coherence for the classes of Gaussian state considered in Sec.~\ref{sec4}.

\section{Incoherent single-mode Gaussian states}
\label{sec2}

In this section we identify the set of one-mode Gaussian incoherent states. Indeed, we obtain the following result:

\begin{theorem}
A single mode Gaussian state is incoherent if and only if the covariance matrix $\sigma$ is diagonal, there is not displacement and squeezing.
\end{theorem}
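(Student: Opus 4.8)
The plan is to establish both implications by unpacking what "incoherent" means at the level of the density matrix and translating it into constraints on the Gaussian data $(\beta,\xi,N_{th})$. Recall that incoherence with respect to the Fock basis means $\rho_G$ is diagonal, i.e. $\langle m|\rho_G|n\rangle=0$ for all $m\neq n$. The thermal state $\nu_{th}$ is manifestly diagonal, so the content of the statement is that conjugating $\nu_{th}$ by $D(\beta)S(\xi)$ preserves diagonality if and only if that conjugation is trivial in the relevant sense, namely $\beta=0$ and $r=0$.

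First I would prove the easy direction: if $\beta=0$ and $\xi=0$, then $\rho_G=\nu_{th}$, which is diagonal in the Fock basis by construction, hence incoherent; and by the parametrization in Eqs.~\eqref{firstm}--\eqref{abc} this forces $R=0$ and $\sigma_G=\frac{1+2N_{th}}{2}\,\1$, so the covariance matrix is diagonal (indeed a multiple of the identity) with vanishing displacement. Conversely, assume $\rho_G$ is incoherent. I would argue in two steps. (i) Displacement must vanish: the mean photon number and, more sharply, the first moments $\langle a\rangle=\tr(\rho_G\, a)$ can be computed directly; for a diagonal state $\sum_n p_n\proj{n}$ one has $\langle a\rangle=\sum_n p_n\langle n|a|n\rangle=0$ since $a$ is strictly lower-triangular in the Fock basis. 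But from Eq.~\eqref{firstm}, $\langle a\rangle=\beta$, so $\beta=0$. (ii) With $\beta=0$, I would show squeezing must vanish by testing a single off-diagonal matrix element. The operator $S(\xi)\nu_{th}S^\dagger(\xi)$ has a Wigner/characteristic function that is Gaussian with covariance matrix \eqref{normalform}; equivalently one can use that $\langle a^2\rangle=\tr(\rho_G\,a^2)$ vanishes for any diagonal state (again $a^2$ is strictly lower triangular), whereas a short computation with the Bogoliubov transformation $S^\dagger(\xi)\,a\,S(\xi)=a\cosh r - a^\dagger e^{i\psi}\sinh r$ gives $\langle a^2\rangle = -e^{i\psi}\cosh r\,\sinh r\,(1+2N_{th})$. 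Setting this to zero and using $1+2N_{th}\geq 1>0$ forces $\sinh r\cosh r=0$, i.e. $r=0$. Hence $\beta=0$ and $r=0$, and then \eqref{abc} gives $a=b=\frac{1+2N_{th}}{2}$, $c=0$, so $\sigma$ is diagonal and there is no displacement nor squeezing, as claimed.

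The main obstacle is making step (ii) airtight: one must be sure that the vanishing of the second moment $\langle a^2\rangle$ is not only necessary but, together with $\beta=0$, actually captures all the off-diagonal content — in principle a state could have $\langle a^2\rangle=0$ while still possessing nonzero matrix elements $\langle n+4|\rho|n\rangle$, etc. I would close this gap by noting that for a zero-displacement Gaussian state the only possibly nonvanishing off-diagonal elements connect Fock states differing by an \emph{even} number of quanta (the squeezing operator $S(\xi)$ only mixes $|n\rangle$ with $|n\pm 2\rangle,|n\pm4\rangle,\dots$), and that all of these are generated by powers of the same quadratic form controlled by $e^{i\psi}\sinh r\cosh r$; once this single parameter vanishes, $S(\xi)=\1$ identically. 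Alternatively, and perhaps more cleanly, I would phrase the whole argument in phase space: a Gaussian state is incoherent iff its Husimi or Wigner function is invariant under the phase rotation $a\mapsto e^{i\theta}a$ for all $\theta$ (since diagonality in the Fock basis is exactly phase-insensitivity of the state), and a Gaussian is rotationally symmetric in phase space iff its first moments vanish and its covariance matrix is a multiple of the identity — which by \eqref{firstm}--\eqref{abc} happens precisely when $\beta=0$ and $r=0$. This rotational-symmetry formulation is the conceptually robust route and I would present the moment computation above as the concrete verification.
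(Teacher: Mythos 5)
Your proposal is correct and follows essentially the same route as the paper: both directions reduce to reading off moment conditions implied by diagonality in the Fock basis (vanishing first moments and a covariance matrix proportional to the identity, which is exactly the content of your $\langle a\rangle=0$ and $\langle a^2\rangle=0$ conditions) and matching them against the parametrization of Eqs.~(\ref{firstm})--(\ref{abc}) to force $\beta=0$ and $r=0$, i.e.\ a thermal state. Your moment computation and the phase-rotation-invariance reformulation are equivalent repackagings of the paper's covariance-matrix argument, and if anything they make the converse step (which the paper only asserts as ``easy to verify'') more explicit.
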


\begin{proof}
By definition ~\cite{Baumgratz2014}, a single mode incoherent state is represented by a diagonal density matrix $\rho_I$,

\begin{equation}
\rho_I=\sum_{n}p_n \left | n \right \rangle \left \langle n \right |
\label{incdensmatr}
\end{equation}

It is easy to verify that the corresponding {\em{CM}} $\sigma_I$ is diagonal with elements
\begin{equation}
\sigma_{jj}=\sum_{n}p_n \frac{2n+1}{2}
\end{equation}
with $j=1,2$, while the first moments $\left \langle X \right \rangle$ and $\left \langle P \right \rangle$ are null, so that the displacement is zero. So if the state is incoherent, that is described by the density matrix $\rho_I$, the corresponding {\em{CM}} $\sigma_I$ is such that the elements have to satisfy the following conditions:
\begin{eqnarray}
\left \langle X \right \rangle &=& 0 =\left \langle P \right \rangle,\label{cond1} \\
\sigma_{11}&=&\sigma_{22}=\sum_{n}p_n \frac{2n+1}{2}, \label{cond2}\\
\sigma_{12}&=&\sigma_{12}=0\label{cond3}.
\end{eqnarray}
Referring explicitly to the generic single mode Gaussian state Eq. (\ref{gensinglemode}), the condition of incoherence Eq. (\ref {cond1}) is satisfied by null values of the complex amplitude $\beta$. Indeed the vector of the first moments $R$ Eq. (\ref {firstm}) is null just for $\beta=0$ (no displacement). Similarly, the conditions Eq.s (\ref {cond2}) and (\ref {cond3}) are satisfied by the values $r=0$ and $\psi=0$ (no squeezing), for which $a=b=\frac{1+2N_{th}}{2}\cosh(2r)$ and $c=0$. This means that a single mode Gaussian state is incoherent when there is no squeezing and no displacement thermal state. Under these conditions the generic Gaussian state Eq. (\ref {gensinglemode}) becomes thermal one.
So, we have shown that a single mode Gaussian state is incoherent if it is a thermal state, $\rho_I\equiv \nu_{th}$ with $p_n=\frac{N_{th}^n}{(1+N_{th})^ {(n+1)}}$.
Let consider now a Gaussian state described by a diagonal {\em{CM}} and without squeezing and displacement. It easy to verify that only in this case the density matrix describing the state is diagonal, $ \left \langle m \left | \rho_G \right |n \right \rangle=0$. Indeed a density matrix represented by a diagonal {\em{CM}}  and by at least no null value between $\beta$ and $r$ isn't diagonal, $ \left \langle m \left | \rho \right |n \right \rangle \neq0$ just as  a the density matrix of a state with $\beta=r=0$ and not diagonal {\em{CM}}.
\end{proof}
In conclusion, the only incoherent single mode Gaussian states are the thermal states, described by the CM:
\begin{equation}
\sigma_{i}=\frac{1}{2}\begin{bmatrix}
1+2N_{th,i} & 0  \\
0 & 1+2N_{th,i}
\end{bmatrix}\;.
\label{IncCM}
\end{equation}

This class of states will be the reference set from which it is necessary to calculate the minimum distance.

\section{Quantum coherence of Gaussian states: distance from the set of incoherent states}
\label{sec3}

We extend the coherence measure $C_x$ introduced in \cite{Baumgratz2014} to Gaussian states $\rho_G$ Eq. (\ref{gensinglemode}), considering the minimum of the distance, induced by the $x$-metric, from the set of the incoherent states Eq. (\ref{incdensmatr}):
\begin{equation}
C_x=\frac{1}{2} \min_{\rho_I}d^2_x(\rho_G,\rho_I).
\end{equation}
In particular, we consider the {\em{Bures}} and the {\em{Hellinger}} metric. \\
The Bures measure is then defined in terms of the Bures distance as
\begin{equation}
C_{Bu} \equiv \frac{1}{2} \min_{\rho_I}d^2_{Bu}(\rho_G,\rho_I)\\
= 1-\max_{\rho_I} \sqrt{\mathcal {F}(\rho_G,\rho_I)},
\label{BuresCoe}
\end{equation}
where $\mathcal{F}(\rho_G,\rho_I)$ is the Uhlmann fidelity.
For Gaussian states, the Uhlmann fidelity~\cite{2012Marian} is found to be:
\begin{equation}
\label{pagfid}
\mathcal {F}=e^{-\frac{1}{2}\delta R^T(\sigma_G+\sigma_I)^{-1}\delta R}(\sqrt{\Delta+\Lambda}-\sqrt{\Lambda})^{-1},
\end{equation}
where $\delta R$ is the difference between the first moments vectors of the relative states $\rho_G$ and $\rho_I$. In the previous section we showed that the class of the incoherent states has null first moments; hence $\delta R\equiv R$ Eq. (\ref{firstm}). The explicit expressions of $\Lambda$, and $\Delta$ are:
\begin{equation}
\Delta = \det(\sigma_G + \sigma_I),
\end{equation}
\begin{equation}
\Lambda = 4 \det(\sigma_G + \dfrac{i}{2} \Omega) \det(\sigma_I + \dfrac{i}{2} \Omega),
\end{equation}
where $ \Omega $ is the symplectic matrix:
\begin{equation}
\Omega=\begin{bmatrix}
0 & 1  \\
-1 & 0
\end{bmatrix}\;.
\end{equation}
The measure Eq. (\ref{BuresCoe}) enjoys the main properties that a {\em{bona fide}} coherence measure must satisfy according to \cite{Baumgratz2014}. Indeed, from the properties of the Fidelity~\cite{1976Uhlmann}, the necessary requirements $(C1')$, $(C2a)$ and $(C3)$ of \cite {Baumgratz2014} are automatically satisfied by the definition Eq.~(\ref{BuresCoe}). In fact, $C_{Bu}=0$ iff $\rho_G \equiv \rho_I$ $(C1')$; moreover $C_{Bu}$ is monotone under incoherent completely positive trace preserving map {\em{ICPTP}} $(C2a)$; finally $C_{Bu}$ is convex because $d_{Bu}$ is contractive and $d_{Bu}^2$ is convex $(C3)$. \\
In the same way, the Hellinger measure is defined in term of the Hellinger distance as:
\begin{equation}
\label{Hellmea}
C_{He} \equiv \frac{1}{2} \min_{\rho_I}d^2_{He}(\rho_G,\rho_I)\\
= 1-\max_{\rho_I} \sqrt{\mathcal {A}(\rho_G,\rho_I)},
\end{equation}
where $ \mathcal{A}(\rho_G,\rho_I) $ is the Affinity~\cite{Marian2015}. For Gaussian states, the Affinity assumes the simple form:
\begin{equation}
\mathcal{A}(\rho_G,\rho_I)=2 e^{-\frac{1}{2}\delta R^T(\sigma_G+\sigma_I)^{-1}\delta R} \dfrac{[\det(\sigma_G) \; \det(\sigma_I)]^{\frac{1}{4}}}{[\det(\sigma_G + \sigma_I)]^{\frac{1}{2}}}.
\end{equation}
It is possible to show, following the same line of reasoning given for the Bures distance, that the Hellinger distance enjoys the required properties.

\section{Results}
\label{sec4}

In this section we analyse the behaviour of the coherence measures Eqs.~(\ref{BuresCoe}) and~(\ref{Hellmea}) for the principal classes of one-mode Gaussian states, obtained from Eqs.~(\ref{normalform}) and~(\ref{abc}) for particular values of parameters.

\subsection{Squeezed thermal states \label{GaussianStates}}

Let us consider, the set of squeezed thermal states (STS), with real squeezing parameter $\xi\equiv r$. Such states is described by the CM (\ref{normalform}) with $a=\left(\frac{1 + 2N_{th}}{2}\right)\left(\cosh (2r)+ \sinh (2r)\right)$, $b=\left(\frac{1 + 2N_{th}}{2}\right)\left(\cosh (2r)- \sinh (2r)\right)$ and  $c=0$.


\begin{figure}[h]
\centering\includegraphics[width=0.45\textwidth]{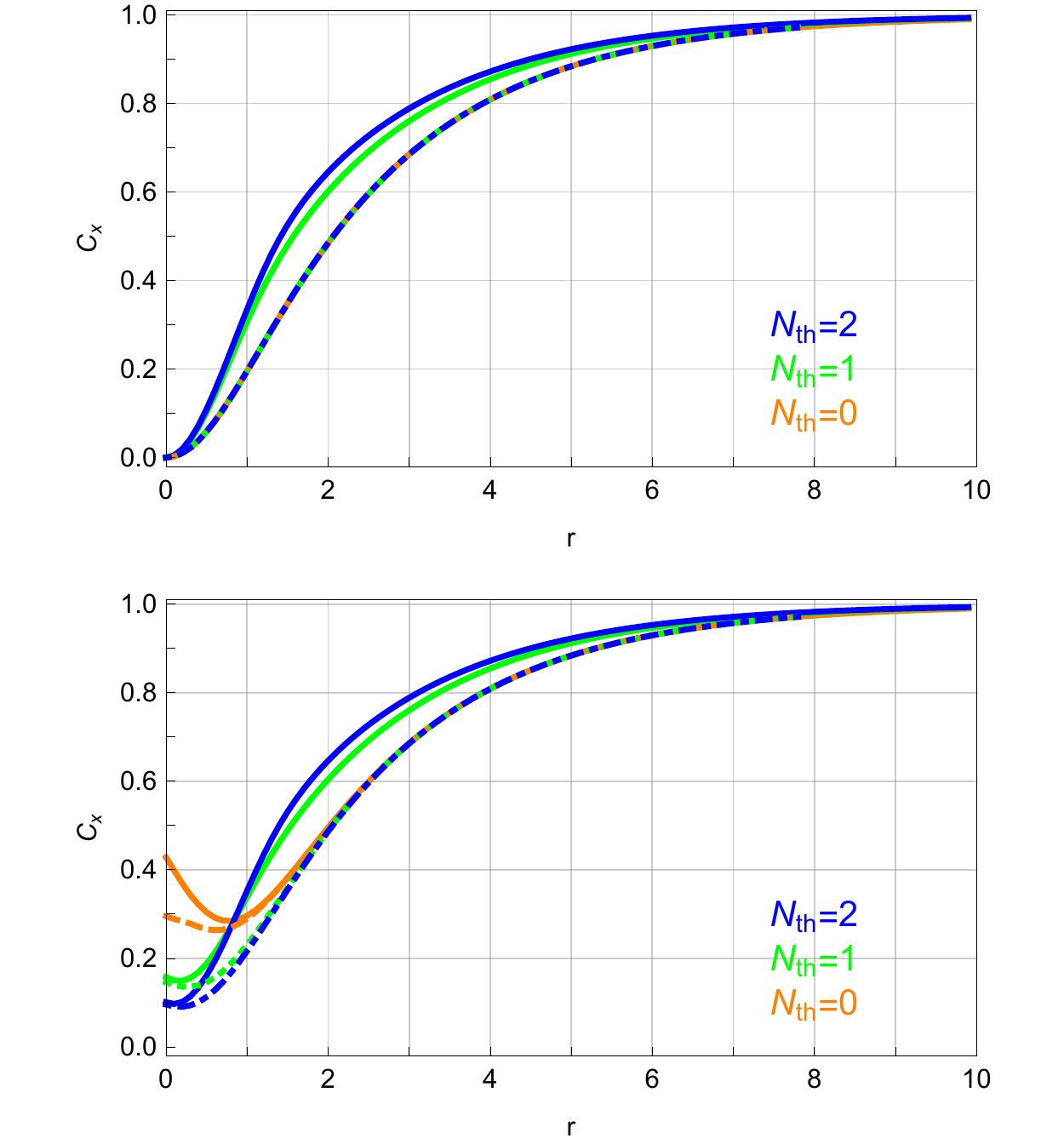}
\caption{(color online) Behavior of the coherence measures Eqs.~(\ref{BuresCoe}) and~(\ref{Hellmea}) for STSs, as a function of the squeezing parameter $r$ for fixed $N_{th}=0$ (orange), $N_{th}=1$ (green), $N_{th}=2$ (blue). The solid and dashed lines represent the Bures and Hellinger measures respectively. The case  $ \beta = 0 $ is shown in the upper panel, while in the lower panel the case $ \beta = 1 $ is reported.}
 \label{GrSTSvsr}
\end{figure}

In Fig.~(\ref{GrSTSvsr}) is reported the behaviour of the coherence measures Eqs.~(\ref{BuresCoe}) and~(\ref{Hellmea}) as a function of the squeezing parameter $r$ for the squeezed vacuum state ($N_{th}=0$) (orange line), and for fixed $N_{th}=1$ (green line) and $N_{th}=2$ (blue line) for the cases $ \beta = 0 $ (upper panel) and $ \beta = 1 $ (lower panel). The coherence measures increases with $r$, due to the fact that the state becomes more and more distinguishable form the set of the matrices proportional to the identity matrix, namely, the set of thermal (incoherent) states.
Therefore:
\begin{equation}
C_d(\tilde{r, N_{th}}) \rightarrow 1 \;\;\; \text{for} \;\;\; r \rightarrow \infty.
\end{equation}

\begin{figure}[h]
\centering\includegraphics[width=0.45\textwidth]{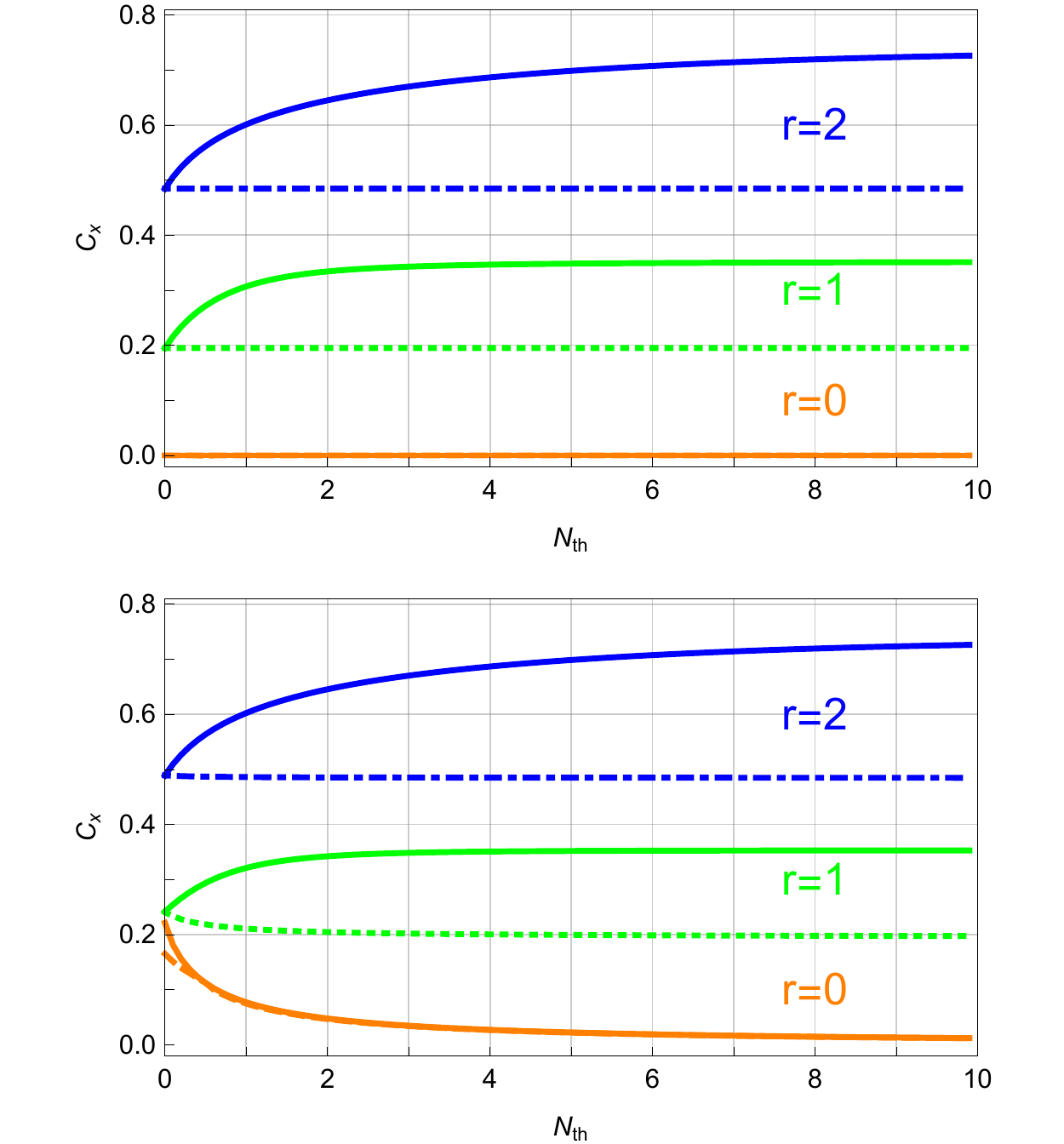}
\caption{(color online) Behavior of the coherence, for STSs, as a function of the thermal photon number $N_{th}$ for fixed $r=0$ (orange),
$r=1$ (green) and $r=2$ (blue). The solid and dashed lines stand for Bures and Hellinger measures respectively. In the upper panel is shown the case $ \beta = 0 $, in the lower panel $ \beta = 1 $.}
\label{GrSTSvsNth}
\end{figure}

In Fig. (\ref{GrSTSvsNth}) is reported the behaviour of quantum coherence measures Eqs.~(\ref{BuresCoe}) and~(\ref{Hellmea}) as a function of the mean thermal photon number $N_{th}$ of the STS for $r=0$  (orange line), for $r=1$ (green line) and for $r=2$ (blue line). The value of the coherence tends asymptotically to a constant value, as shown in~(\ref{App1}).
As you can see from in Fig. (\ref{GrSTSvsNth}), the value of $ C_d $ tends asymptotically to a constant (in the case of the Hellinger, is constant for each $ N_{th} $). Apparently, this may seem abnormal behavior, since the elements $ a $ and $ b $ of the CM always differ more increasing of $ N_{th} $. However, we have shown analytically (see Appendix A) that the $ C_d $ value tends asymptotically to a constant for $ N_{th} \rightarrow \infty $. Indeed, although the terms $ a $ and $ b $ differ more and more, for very large $ N_{th} $ both tend to infinite of the same order. Therefore:
\begin{equation}
C_d(\tilde{r, N_{th}}) \rightarrow \text{const} \;\;\; \text{for} \;\;\; N_{th} \rightarrow \infty
\end{equation}

\begin{center}
\begin{tabular}{|c|c|c|}
\hline
\multicolumn{3}{|c|}{Squeezed Thermal State} \\
\hline
$ C_x $ & $ N_{sq} $ & $ N_{th} $ \\
\hline
$ C_{Bu} $ & $ 10^{11} - 10^{12} $ & 2 \\
\hline
$ C_{He} $ & 2 & $ 10^{11} - 10^{12} $ \\
\hline
\end{tabular}
\captionof{table}{Example values of the parameters $ N_{sq} $ and $ N_{th} $ for the coherence measures Eq.~(\ref{BuresCoe}), and Eq.~(\ref{Hellmea}) corresponding to $ 99 \% $ of quantum coherence.}
\label{tabsts}
\end{center}

In Tab.(\ref{tabsts}) some value of parameters corresponding to $ 99 \% $ of quantum coherence are reported for the quantum coherence measures considered in the text.

\subsection{Coherent thermal states }

We consider, at first, coherent thermal states CTSs with real amplitude of the displacement so that the first moments Eq. (\ref{firstm}) are $R=\sqrt {2}(\beta,0)$, while the CM corresponds exactly to the covariance matrix of a thermal state,
\begin{equation}
\sigma_{CTS}=\frac{1}{2}\begin{bmatrix}
1+2N_{th} & 0  \\
0 & 1+2N_{th}
\end{bmatrix}\;.
\label{CTSCM}
\end{equation}

\begin{figure}[h]
\centering\includegraphics[width=0.45\textwidth]{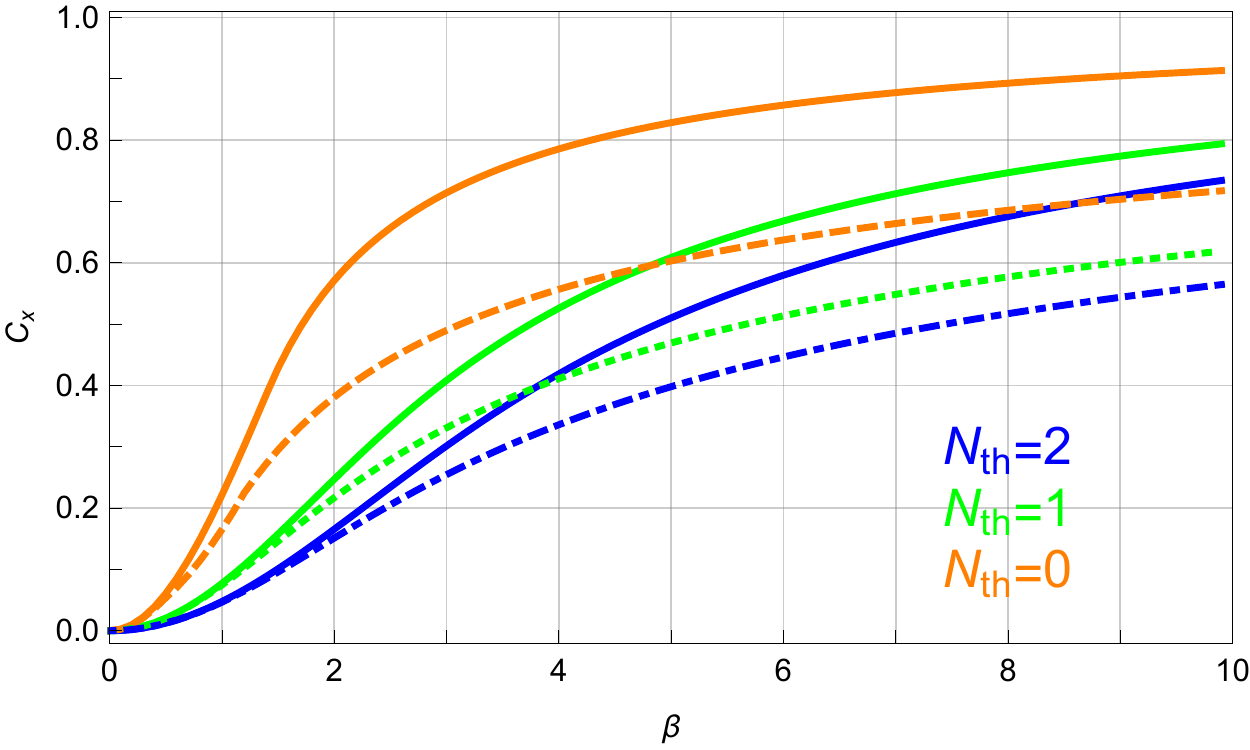}
\caption{(color online) Behavior of coherence, for a CTS, as a function of the displacement $\beta$, for different values of the mean thermal
photon number: $N_{th}=0$ (orange), $N_{th}=1$ (green) and $N_{th}=2$ (blue). The solid and dashed lines stand for Bures and Hellinger measures
respectively.}
\label{GrCTSvsN}
\end{figure}

In Fig (\ref{GrCTSvsN}) it is reported the behaviour of quantum coherence measures Eqs.~(\ref{BuresCoe}) and~(\ref{Hellmea}) as a function of $\beta$ for CTS with fixed thermal photon number $N_{th}=0$ (orange line), representing the ideal coherent states, $N_{th}=1$ (green line), and $N_{th}=2$ (blue line). As expected, the coherence increases at the increasing of the displacement. In particular, fixing the thermal contribution to $N_{th}\equiv \tilde{N}_{th}$, it results:

\begin{equation}
C_{x}(\beta,\tilde{N}_{th}) \rightarrow 1 \;\;\; \text{for} \;\;\; \beta \rightarrow \infty.
\end{equation}

\begin{figure}[h]
\centering\includegraphics[width=0.45\textwidth]{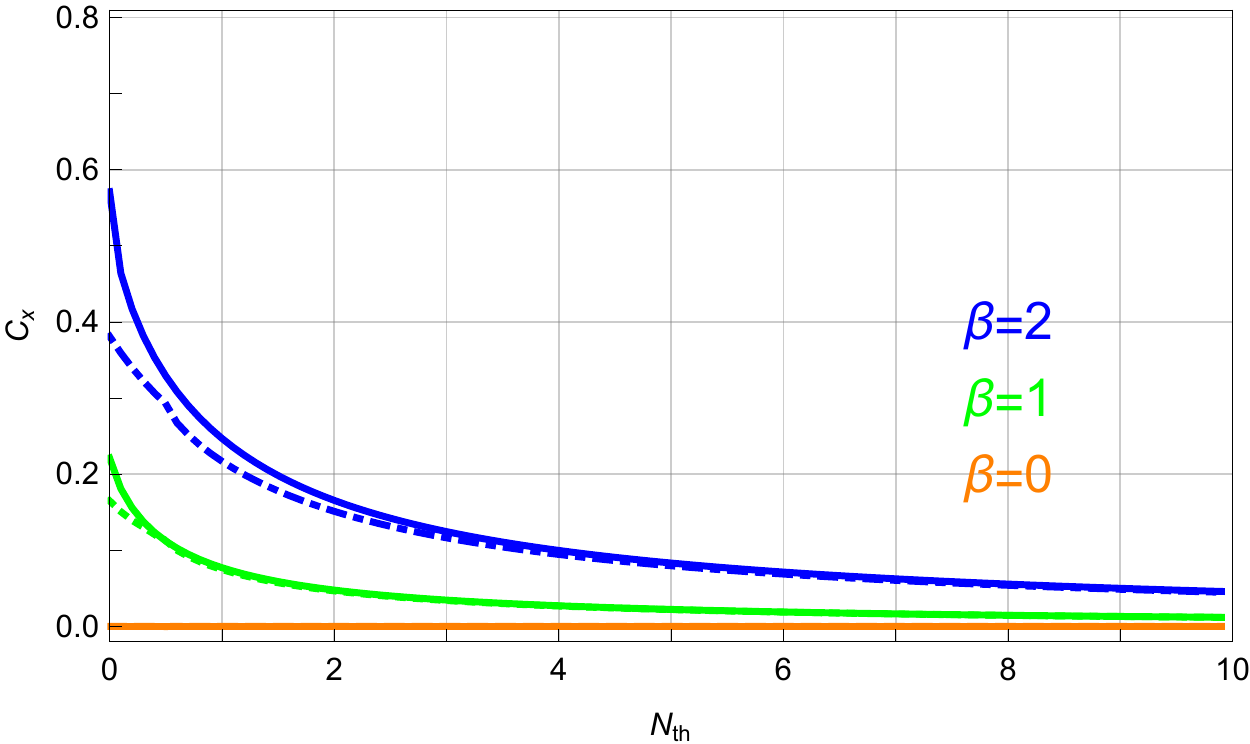}
\caption{(color online) Behavior of coherence, for a CTS, as a function of the thermal photon number $N_{th}$, for different values of displacement: $\beta=0$ (orange), $\beta=1$ (green) and $\beta=2$ (blue). The solid and dashed lines stand for Bures and Hellinger measures
respectively.}
\label{GrCTSvsNth}
\end{figure}

In Fig. (\ref{GrCTSvsNth}) is reported the behaviour of quantum coherence measures Eqs.~(\ref{BuresCoe}) and~(\ref{Hellmea}) as a function of the thermal contribution $N_{th}$ for three different values of the coherent amplitude: $\beta=0$ (orange line), that represents thermal (incoherent) state, $\beta=1$ (green line) and $\beta=2$ (blue line). The measures are decreasing at the increasing of $N_{th}$. We can see that, more in general, for a generic fixed value of the coherent amplitude $\beta \equiv \tilde{\beta}$,


\begin{equation}
C_{d}(\tilde{\beta},{N}_{th}) \rightarrow 0 \;\;\; \text{for} \;\;\; {N}_{th} \rightarrow \infty.
\end{equation}

This is due to the fact that the contribution of the displacement $ \beta $ to the quantum coherence becomes more and more negligible at $ N_{th} $ increasing, namely the number of incoherent thermal photons $ N_{th,i} $, that minimize the measures, tends to the number of thermal photons of the state.

\begin{center}
\begin{tabular}{|c|c|c|}
\hline
\multicolumn{3}{|c|}{Coherent Thermal State} \\
\hline
$ C_x $ & $ N_{coh} $ & $ N_{th} $ \\
\hline
\multirow{2}*{$ C_{Bu} $} & $ 1.6 \times 10^3 $ &  $2$\\
\cline{2-3}
& $2$ & never \\
\hline
\multirow{2}*{$ C_{He} $} & $ 10^4 $ & 2\\
\cline{2-3}
& $2$ & never \\
\hline
\end{tabular}
\captionof{table}{Example values of the parameters $ N_{coh} $ and $ N_{th} $ for the coherence measures Eq.~(\ref{BuresCoe}), and Eq.~(\ref{Hellmea}) corresponding to $ 99 \% $ of quantum coherence.}
\label{tabcts}
\end{center}

In Tab.(\ref{tabcts}) some value of parameters corresponding to $ 99 \% $ of quantum coherence are reported for the quantum coherence measures considered in the text.

\subsection{Thermal squeezed states}
In conclusion, we consider thermal squeezed states TSSs with real squeezing parameter $\xi\equiv r$. Such state is described by the CM
\begin{equation}
\sigma_{TSS}=\begin{bmatrix}
N_{th} + e^{2 r} & 0  \\
0 & N_{th} + e^{-2 r}
\end{bmatrix}\;.
\label{TSSCM}
\end{equation}

\begin{figure}[h]
\centering\includegraphics[width=0.45\textwidth]{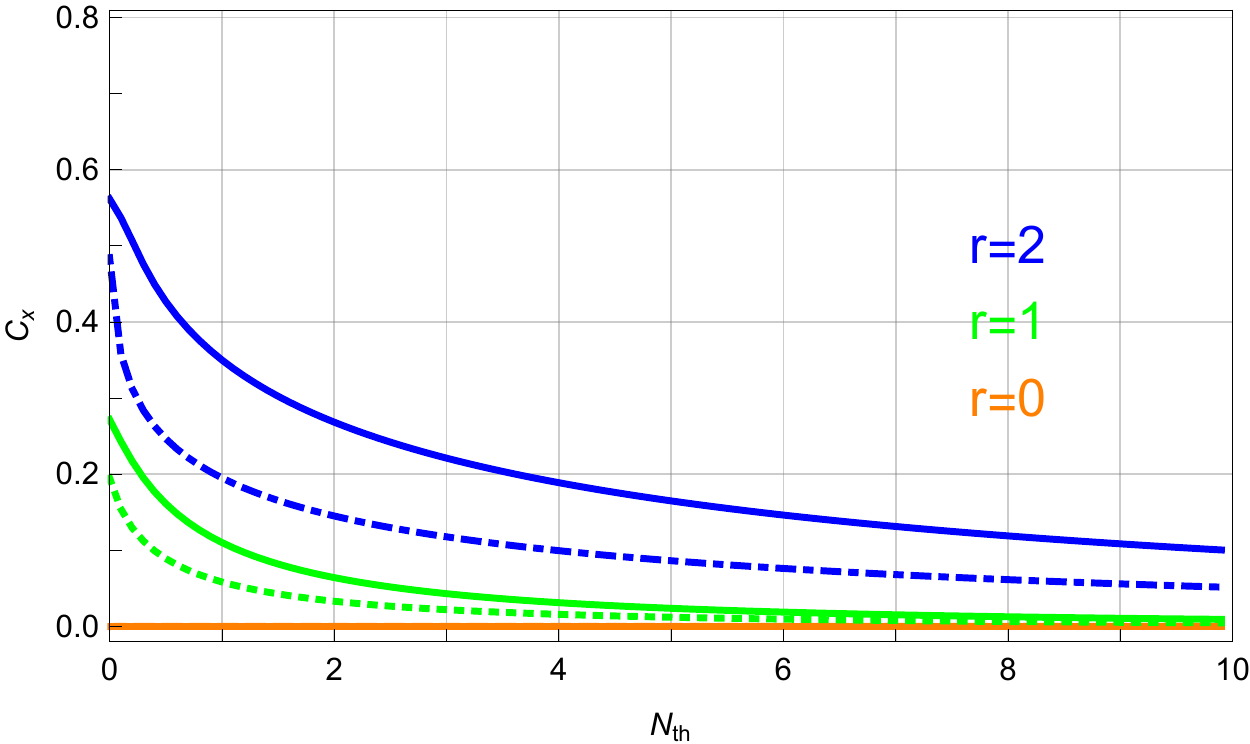}
\caption{(color online) Behavior of coherence, for a TSS, as a function of the thermal photon number $N_{th}$, for different values of squeezing: $r=0$ (orange), $r=1$ (green) and $r=2$ (blue). The solid and dashed lines stand for Bures and Hellinger measures, respectively.}
\label{GrTSSvsNth}
\end{figure}

In Fig. (\ref{GrTSSvsNth}) is reported the behaviour of quantum coherence measures Eqs.~(\ref{BuresCoe}) and~(\ref{Hellmea}) as a function of the thermal contribution $N_{th}$ for three different values of the squeezing parameter: $r=0$ (orange line), $r=1$ (green line) and $r=2$ (blue line). The measures are decreasing at the increasing of $N_{th}$. Indeed, fixing the value $ r $ of the squeezing, increasing of the thermal photons $ N_{th} $ causes the diagonal elements of the covariance matrix Eq.~(\ref{TSSCM}) to become less distinguishable. As a consequence, in the limit $ N_{th} \rightarrow \infty $ the TSS become incoherent state. Hence we have:
\begin{equation}
C_{d}(\tilde{r},{N}_{th}) \rightarrow 0 \;\;\; \text{for} \;\;\; {N}_{th} \rightarrow \infty.
\end{equation}

\begin{figure}[h]
\centering\includegraphics[width=0.45\textwidth]{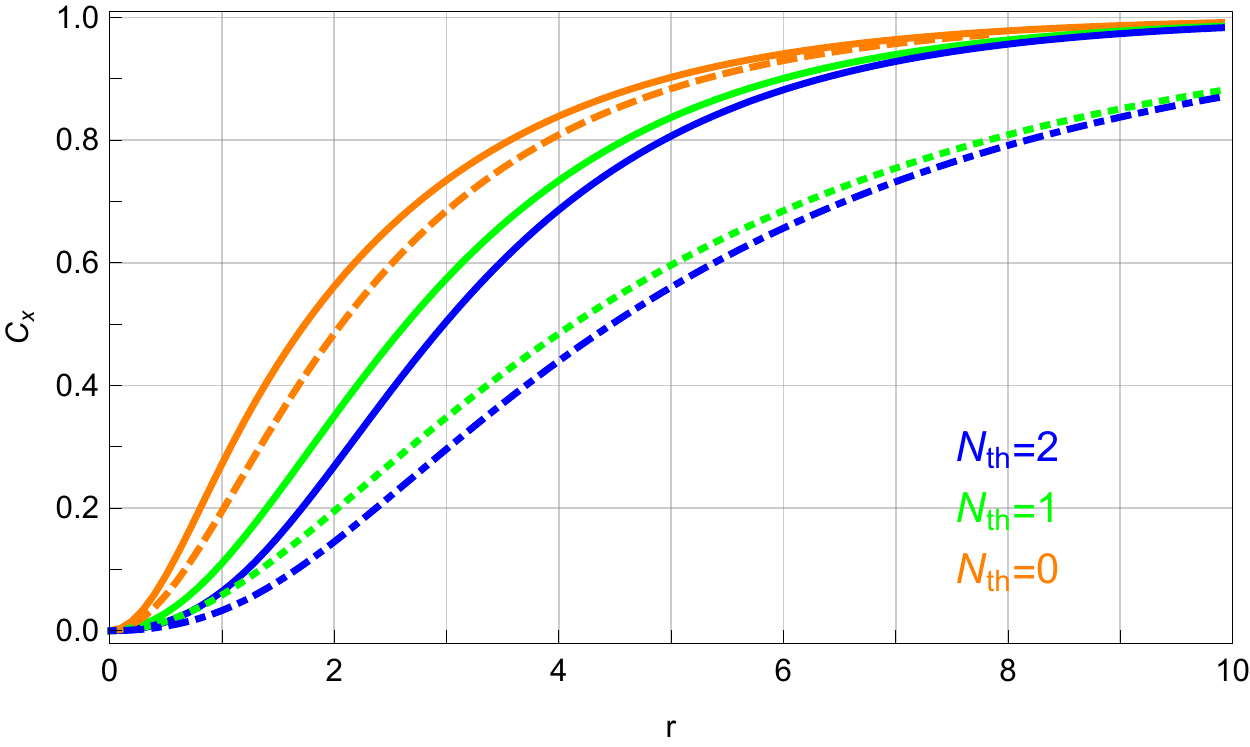}
\caption{(color online) Behavior of coherence, for a TSS, as a function of the squeezing $ r $, for different values of thermal photon number: $N_{th}=0$ (orange), $N_{th}=1$ (green) and $N_{th}=2$ (blue). The solid and dashed lines stand for Bures and Hellinger measures, respectively.}
\label{GrTSSvsr}
\end{figure}

In Fig. (\ref{GrTSSvsr}) is reported the behaviour of quantum coherence measures Eqs.~(\ref{BuresCoe}) and~(\ref{Hellmea}) as a function of the squeezing parameter $ r $ for three different values of thermal photon number: $N_{th}=0$ (orange line), $N_{th}=1$ (green line) and $N_{th}=2$ (blue line). The measures are increasing at the increasing of $ r $. \\
At $ N_{th} $ fixed, increasing $ r $ causes the diagonal elements of $ \sigma_{TSS} $ to become more distinguishable. As a result the covariance matrix defining the state becomes more and more different from the matrices that define the incoherent state.
Consequently, it is never possible to find a value of $ N_{th,i} $ that minimizes the distance between $ \sigma_{TSS} $ and $ \sigma_{I} $. Therefore we have:
\begin{equation}
C_{d}(r,\tilde{N}_{th}) \rightarrow 1 \ \;\;\; \text{for} \;\;\; r \rightarrow \infty.
\end{equation}

\begin{center}
\begin{tabular}{|c|c|c|}
\hline
\multicolumn{3}{|c|}{Thermal Squeezed State} \\
\hline
$ C_x $ & $ N_{sq} $ & $ N_{th} $ \\
\hline
\multirow{2}*{$ C_{Bu} $} & $ 2 $ &  never\\
\cline{2-3}
& $10^{12} - 10^{13}$ & $2$ \\
\hline
\multirow{2}*{$ C_{He} $} & $2$ & never\\
\cline{2-3}
& $10^{16} - 10^{17}$ & $2$ \\
\hline
\end{tabular}
\captionof{table}{Example values of the parameters $ N_{sq} $ and $ N_{th} $ for the coherence measures Eq.~(\ref{BuresCoe}), and Eq.~(\ref{Hellmea}) corresponding to $ 99 \% $ of quantum coherence.}
\label{tabtss}
\end{center}

In Tab.(\ref{tabcts}) some value of parameters corresponding to $ 99 \% $ of quantum coherence are reported for the quantum coherence measures considered in the text.

\section{Conclusions}
\label{sec5}

Summarizing we have defined a geometric-based approach for the quantification of quantum coherence for one-mode Gaussian states as the minimal distance of the given state form the set of incoherent (thermal) states. Furthermore, we have studied the behaviour of the quantum coherence for the main classes of the quantum Gaussian states. \\

As expected, the increase of the displacement $ \beta $ causes the increase of the quantum coherence measures. Instead, in general, increasing the number of thermal photons, has a detrimental effect on the quantum coherence. Surprisingly, we observe that, for the squeezed thermal states, except for the case $ r=0 $, the quantum coherence tends to a constant value different from zero for $ N_{th} \gg 1 $; furthermore for certain values of the squeezing parameter $ r $ this asymptotic value is greater than the initial value ($ N_{th} = 0 $). This counterintuitive behaviour is the same as observed for other geometric quantum correlations measure~\cite{2015Roga}.

This work represents the first step in understanding the usefulness of quantum coherence for the quantum technologies. The natural next step will be to extend the measure to the case of two-mode Gaussian states~\cite{2016Torre} and to classify the Gaussian incoherent quantum operations.

\appendix

\section{Asymptotic study of the Bures measure for the $ \sigma_{STS} $}
\label{App1}

In this section we show that, for the class of squeezed thermal states, the value of coherence, when the number of thermal photons is sufficiently high, tends to a squeezed-dependent value.

We consider, without loss of generality, a covariance matrix of the form
\begin{equation}
\label{testcm}
\sigma_p=\begin{bmatrix}
C_1 \left( \frac{1 + 2 N_{th}}{2} \right) & 0  \\
0 & C_2 \left( \frac{1 + 2 N_{th}}{2} \right)
\end{bmatrix}\;,
\end{equation}
as Eq.~(\ref{testcm}) simulates perfectly the state $ \sigma_{STS} $ in the case of fixed squeezing $ r $, putting:
\begin{align}
C_1 = (\sinh(r) + \cosh(r)) \nonumber \\
C_2  =  (\cosh(r) - \sinh(r)).
\end{align}
We consider, furthermore, the covariance matrix Eq.~(\ref{IncCM}) of the reference states.\\

For ease of calculation we fix the displacement $ \beta = 0 $; we have verified numerically that the more general case presents the same behaviour. With this choice, the exponential in the fidelity is equal to one, namely Eq.~(\ref{pagfid}) reduces to:
\begin{equation}
\mathcal {F}=\dfrac{1}{\sqrt{\Delta+\Lambda}-\sqrt{\Lambda}};
\end{equation}

\begin{figure}[h]
\centering\includegraphics[width=0.45\textwidth]{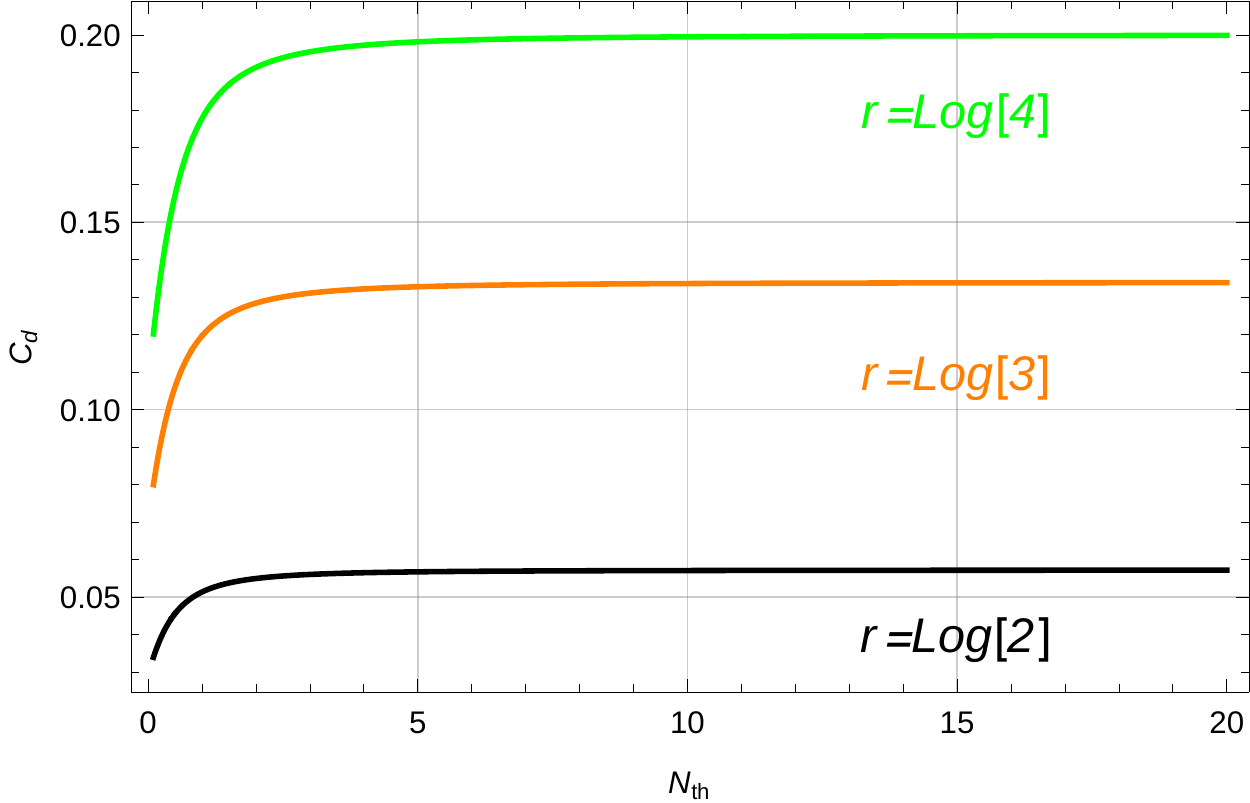}
\caption{(color online) Asymptotic behavior of the Bures measure for squeezed thermal state for different values of the squeezing parameter $r$. }
\label{fidel_r}
\end{figure}

Now we need to maximize the fidelity on the set of incoherent states, namely, we have to maximize on $ N_{th,i} $. Remember that the number of thermal photons $ N_{th} $ and $ N_{th,i} $ are greater or equal than $ 0 $, it is possible to show that the maximized fidelity is given by:
\begin{equation}
\label{fidan}
\mathcal{F}_{a,r}= \left\lbrace \begin{array}{cc}
N(r) & a=0 \\
\mathcal{PZ}_{a,r} [1] & 0<a< \bar{a}_1(r) \\
\mathcal{PZ}_{a,r} [3]  & \bar{a}_1(r)\leq a < \bar{a}_2(r) \\
\mathcal{PZ}_{a,r} [2]  & a \geq \bar{a}_2(r)
\end{array} \right. ,
\end{equation}\\

where $ 0 \leq N(r) \leq 1 $ is a $ r $-dependent constant and $ \mathcal{PZ}_{a,r} [i] $ are the $ i$-th solution of the following polynomial $ \mathcal{PZ}_{a,r} (\zeta) $:
\begin{widetext}
\begin{align}
\label{petr2}
\mathcal{PZ}_{a,r} (\zeta) &= (K_8(r) a^8+K_7(r) a^7+K_6(r) a^6 +K_5(r) a^5+K_4(r) a^4+K_3(r) a^3 +K_2(r) a^2+K_1(r) a+K_0(r) )\zeta^8  \nonumber \\
&+(I_8(r) a^8+I_7(r) a^7+I_6(r) a^6 +I_5(r) a^5+I_4(r) a^4 +I_3(r) a^3 +I_2(r) a^2+I_1(r) a+I_0(r))\zeta^6 \nonumber \\
&+(J_8(r) a^8+J_7(r) a^7+J_6(r) a^6+J_5(r) a^5+J_4(r) a^4+J_3(r) a^3+J_2(r) a^2+J_1(r) a+J_0(r))\zeta^4 \nonumber \\
&+(L_6(r) a^6+L_5(r) a^5+L_4(r) a^4+L_3(r) a^3+L_2(r) a^2+L_1(r) a+L_0(r))\zeta^2 \nonumber \\
&+M_4(r) a^4+M_3(r) a^3+M_2(r) a^2+M_1(r) a+M_0(r)
\end{align}
\end{widetext}
From the explicit expression Eqs.~(\ref{fidan}) and~(\ref{petr2}) it is possible to show that the asymtotic value of the Fidelity with respect the number of thermal photons is a $ r $-dependent positive constant:
\begin{equation}
\lim_{a \rightarrow \infty} \mathcal{F}_{a,r} = \lim_{a \rightarrow \infty} \mathcal{PZ}_{a,r} [2] = \text{const}.
\end{equation}
To better clarify the result, in Fig.~(\ref{fidel_r}) we show the behaviour of the Bures measure as a function of $ N_{th} $ for different values of the squeezing parameter $ r $.


\label{apphell}

\

\end{document}